\title{Relay-Assisted User Scheduling in Wireless Networks with Hybrid-ARQ}
\author{Caleb K. Lo$^{1}$, John J. Hasenbein$^{2}$, Sriram Vishwanath$^{1}$ and Robert W. Heath, Jr.$^{1}$ \\ 
\\
$^{1}$Wireless Networking and Communications Group \\ 
Department of Electrical and Computer Engineering \\ 
The University of Texas at Austin \\ 
1 University Station C0803 \\ 
Austin, TX 78712-0240 \\ 
Phone: (512) 471-1190 \\ 
Fax: (512) 471-6512 \\ 
Email: \{clo, sriram, rheath\}@ece.utexas.edu \\ 
$^{2}$Graduate Program in Operations Research and Industrial Engineering \\ 
Department of Mechanical Engineering \\ 
The University of Texas at Austin \\ 
Austin, TX 78712-1063 \\ 
E-mail: jhas@mail.utexas.edu}
\date{}
\begin{document}

\singlespacing

\maketitle

\begin{abstract}
{This paper studies the problem of relay-assisted user scheduling for downlink wireless transmission.  The base station or access point employs hybrid automatic-repeat-request (HARQ) with the assistance of a set of fixed relays to serve a set of mobile users.  By minimizing a cost function of the queue lengths at the base station and the number of retransmissions of the head-of-line packet for each user, the base station can schedule an appropriate user in each time slot and an appropriate transmitter to serve it.  It is shown that a priority-index policy is optimal for a linear cost function with packets arriving according to a Poisson process and for an increasing convex cost function where packets must be drained from the queues at the base station.}
\end{abstract}

\doublespacing

Keywords - Relays, scheduling policies, hybrid automatic-repeat-request, priority-index rules.

\section{Introduction}\label{intro}

Relay-assisted communication will increase system throughput and coverage in local and metropolitan area networks \cite{RelaTaskGrou}.  The focus of this paper is on quantifying the scheduling-related benefits of relay-assisted communication.  For downlink transmission, this raises the question of how the presence of fixed relays impacts the user scheduling decisions at the base station or access point.  Relay-assisted scheduling has been studied in a variety of networks\footnote{Note that as the number of relays increases, communication between the base station and the relays becomes more challenging.  In particular, the level of signaling overhead increases, and intelligent frequency reuse planning for multi-relay transmission becomes more difficult.} \cite{VisMuk:PerfCellNetwRela:Sep:05, TasEph:StabPropConsQueu:92, NeeETAL:DynaPoweAlloRout:Jan:05, YehBer:ThroOptiContCoop:Oct:07}.

The benefits of relaying are enhanced by intelligent reception strategies such as HARQ \cite{ZhaVal:PracRelaNetwGene:Jan:05}.  Most of the prior work on HARQ-based scheduling does not account for the potential benefits of assisting relays, though.  Relays can be employed to decrease the number of HARQ transmissions that are required to serve a particular user.  Given a HARQ transmission framework, the relay-assisted scheduling problem is not merely a function of user queue lengths at the base station \cite{VisMuk:PerfCellNetwRela:Sep:05}.  It is now important to also consider the number of HARQ transmissions that have occurred for each user's head-of-line (HoL) packet, which directly impacts the decoding delay that each user incurs.  Decoding delay also depends on queue lengths \cite{BerGal:DataNetw:91} and has influenced work in the scheduling domain \cite{LuYao:OptiContFluiSide:Oct:03}.

In this paper we derive cost-minimizing user scheduling policies for a relay-based modification of the HARQ model in \cite{HuaETAL:WireScheHybrARQ:Nov:05}.  In \cite{HuaETAL:WireScheHybrARQ:Nov:05}, packets arrive at the base station, and in each time slot one of the users is scheduled.  Each user has a cost function that depends on its queue length at the base station and the number of retransmissions that have occurred for its HoL packet.  The objective is to schedule a user to minimize the long-term average expected cost.  It is shown in \cite{HuaETAL:WireScheHybrARQ:Nov:05} that the optimal scheduler is a fixed priority-index policy \cite{Kli:TimeSharServSyst:74}, where an index, or ``priority,'' is calculated for each user.  The users are ranked according to their priority indices, and the highest-ranked user with a nonempty queue is serviced in that time slot.  One example of a priority index is the ratio of the storage cost at the base station for a given user's packet to the expected time before that user decodes that packet \cite{HuaETAL:WireScheHybrARQ:Nov:05}.

Since the analysis in \cite{HuaETAL:WireScheHybrARQ:Nov:05} does not consider the presence of relays, it is unclear as to whether a priority-index policy is optimal for our relay-assisted system, as the priority index for each user can be adjusted if at least one relay has previously decoded its HoL packet.
To address this issue, we consider relay-assisted variants of the two problems in \cite{HuaETAL:WireScheHybrARQ:Nov:05}.  One problem entails minimizing a linear cost function with Poisson arrivals at the base station (LPA), while the other problem entails minimizing an increasing convex function of queue length without any new arrivals at the base station (DC).  We prove that the optimal scheduler for the relay-assisted variants of the LPA and DC problems is actually a priority-index policy as in \cite{HuaETAL:WireScheHybrARQ:Nov:05}.

\section{System Model}\label{system-model}
First, we introduce the notation used throughout the paper.  $|z|^2$ denotes the absolute square of a complex number $z$.  $\mathbb{E}$ denotes the expectation operator.

Consider the system in Fig. \ref{sys-model}, which consists of a single base station, $M$ fixed relays, and $N$ users.  Packets for each user arrive at the base station, and each packet is placed in a queue for its intended user.  The packet arrival processes are mutually independent.  Let $h_{i,n}$ denote the channel between nodes $i$ and $n$.  In each time slot, the base station selects a packet, which is then transmitted by the base station or a selected relay.  This relay must have previously decoded that packet.  If the scheduled user decodes the packet, the base station flushes it from its queue, and each relay removes the packet from its memory.  If the scheduled user cannot decode the packet, though, it remains in its queue at the base station.  Each relay also retains the packet in its memory, and the base station may either select that packet or a packet intended for another user for the next time slot.

When a packet is transmitted, its intended mobile target and all of the relays that have yet to decode it employ a generic HARQ decoding strategy.  For example, each of these receiving nodes can use maximal-ratio combining of successive transmissions of this packet, with the objective of improving its decoding probability.


We assume that before each time slot, the base station knows its channel gain $|h_{t,i}|^2$ to each user $i$ and the channel gain $|h_{a,i}|^2$ from each relay $a$ to each user $i$.  This is reasonable for a cellular network with a relatively small number of users $N$ and relays $M$.  As $N$ and/or $M$ grow large, though, the level of signaling overhead would become prohibitive for proper network operation.  We also assume that time is slotted, and each channel gain $|h_{i,n}|^2$ remains constant over a single HARQ retransmission sequence, which consists of a finite number of slots.  This assumption is reasonable in a slow fading environment.  Each channel gain $|h_{i,n}|^2$ also varies independently from one HARQ retransmission sequence to the next, which is a block fading assumption.

Given the transmission model, the relays are only considered in the scheduling policy when a selected user fails to decode its transmitted packet, requiring its future retransmission.  Each relay can store one packet for each user, where 1) this packet has been transmitted by the base station and 2) its intended mobile target failed to decode it.  Thus, the packet arrival process at each relay can be described as follows: assuming that there is at least one nonempty queue at the base station, one packet arrives in each time slot.  The packet arrival probability for a given user is the probability of that user being scheduled by the base station in that time slot.

Let $\textbf{S}(n)$ from \cite{HuaETAL:WireScheHybrARQ:Nov:05} be the state vector for the base station at time slot $n$.  Thus, $\textbf{S}(n)$ includes the number of transmission attempts for the current HoL packet for each user and the queue length for each user.  Also, let $\textbf{S}_a(n) = \{R_{a,1}(n),R_{a,2}(n),\ldots,R_{a,N}(n)\}$ be the state vector for relay $a$ at time slot $n$, where $R_{a,i}(n) = 1$ if relay $a$ has decoded the HoL packet for user $i$, but user $i$ has not decoded it.  Otherwise, $R_{a,i}(n) = 0$.  Let $\mathcal{M} = \{BS,1,2,\ldots,M\}$ denote the set of allowed transmitters.  Our objective is to design a scheduling policy $\pi_R\in\Pi_R$ such that $\pi_R(\textbf{S}(n),\textbf{S}_1(n),\textbf{S}_2(n),\cdots,\textbf{S}_M(n)) = (i,a)$, where transmitter $a\in\mathcal{M}$ serves the scheduled user $i$.

\section{Relay-Assisted Linear Poisson Arrivals Problem}\label{ralpa-main}
We consider the relay-assisted linear Poisson arrivals (RLPA) problem, which is a variant of the LPA problem in \cite{HuaETAL:WireScheHybrARQ:Nov:05}.  In the LPA problem, packets arrive at their corresponding queues at the base station.  The arrival process of the packets for user $i$ is Poisson with rate $\lambda_i$.  Let $c_{i,r_i}$ denote the cost of storing a packet for user $i$ that has already been transmitted $r_i$ times.  The base station computes a cost function $U_i$ for user $i$ that both depends on $r_i$ and is linear in the queue length $x_i(n)$, where
\begin{equation}
U_i(x_i(n),r_i^{HoL}(n)) = \left\{ \begin{array}{ll}
c_{i,0}(x_i(n)-1)+c_{i,r_i^{HoL}(n)} & x_i(n) > 0 \\
0 & x_i(n) = 0
\end{array} \right.
\end{equation}
\begin{equation}
0\leq c_{i,r_i}\leq c_{i,r_i^{'}},\quad r_i < r_i^{'} \nonumber
\end{equation}
implying that the storage cost is a nondecreasing function of the number of transmission attempts.


The RLPA problem entails determining the scheduling policy $\pi_R\in\Pi_R$ that minimizes the long-run average expected cost
\begin{eqnarray}
J_{RLPA} & = & \lim_{\tau\rightarrow\infty}\frac{1}{\tau}\mathbb{E}_{\pi_R}\Bigg[\sum_{n=1}^{\tau}\sum_{i=1}^{N}U_i(x_i(n),r_i^{HoL}(n))\Bigg]. \nonumber
\end{eqnarray}

The optimal policy for the RLPA problem is based on the fixed priority-index policy that is optimal for the LPA problem \cite{HuaETAL:WireScheHybrARQ:Nov:05}.  For the RLPA problem, knowledge of $\{\textbf{S}_1(n),\textbf{S}_2(n),\ldots,\textbf{S}_M(n)\}$ at the base station is useful in deciding which users can be served more quickly than others.  Note that cost increases with the number of retransmission attempts and the incurred delay.

\newtheorem{rlpa}{Theorem}
\begin{rlpa}\label{rlpa}
The optimal scheduling policy for the RLPA problem is a priority-index rule, where the HoL packet with the highest priority index over all nonempty base station queues is selected.  The transmitter that yields the highest priority index transmits the selected HoL packet.
\end{rlpa}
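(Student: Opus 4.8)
The plan is to cast the RLPA problem as an average-cost Markov decision process on the state $\mathbf{Z}(n)=(\mathbf{S}(n),\mathbf{S}_1(n),\ldots,\mathbf{S}_M(n))$ with action $(i,a)$, and then to establish the index structure in two stages: first pin down the choice of transmitter, reducing the per-slot decision to ``which user,'' and then invoke the optimality of the priority-index policy for the relay-free LPA problem \cite{HuaETAL:WireScheHybrARQ:Nov:05}. Existence of an optimal stationary policy, the fact that one may restrict to non-idling policies (immediate, since $U_i=0$ when $x_i=0$, costs are nondecreasing in $r_i$, and idling produces no relay decoding either), and finiteness of the relevant expected decoding times are inherited from the LPA analysis, because adjoining the relay vectors $\mathbf{S}_a$ to the state changes neither the arrival process nor the cost functions.

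First I would handle the transmitter choice. Suppose an optimal policy, in some state, serves user $i$ through transmitter $a'$ while a different admissible transmitter $a^\star\in\{BS\}\cup\{a:R_{a,i}(n)=1\}$ yields a strictly larger priority index for $i$'s HoL packet. I would compare this policy against the one that agrees everywhere except that it uses $a^\star$ for that single transmission, via a sample-path coupling over the common arrival realizations. Since a ``stronger'' transmission in the SNR sense that defines the index gives user $i$ a decoding time that is stochastically no later under the generic (SNR-monotone) HARQ rule, the modified policy weakly lowers the cost contributed by user $i$'s queue while leaving every other user's service opportunities and the base-station queue dynamics unchanged; the relay vectors either do not react at all (a relay-initiated transmission) or react monotonically (a stronger base-station transmission), so no other user is made worse off. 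Hence one may assume without loss of optimality that whenever user $i$ is served, the index-maximizing transmitter $a^\star(i)$ is used.

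With the transmitter pinned down, each nonempty user $i$ carries a scalar priority index $\nu_i$ — the LPA index of \cite{HuaETAL:WireScheHybrARQ:Nov:05} evaluated with the decoding probabilities of transmitter $a^\star(i)$, e.g. the ratio of the storage cost $c_{i,r_i^{HoL}}$ to the expected number of slots until user $i$ decodes its HoL packet. The residual decision — which nonempty queue to serve — then has the branching-bandit form underlying the LPA/Klimov result: linear holding costs with the nondecreasing sequence $c_{i,r_i}$, Poisson arrivals, each HoL packet's ``class'' encoding its retransmission count and the set of relays holding it, and a service attempt that either clears the HoL packet (the next packet enters at class zero with empty relay set) or advances its class. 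Since this structure satisfies the hypotheses under which the priority-index policy is proved optimal for the LPA problem \cite{HuaETAL:WireScheHybrARQ:Nov:05,Kli:TimeSharServSyst:74}, that argument carries over to show it is optimal to serve the nonempty queue of largest $\nu_i$; combined with the previous paragraph, this is exactly the stated rule. I would close by noting that $\nu_i$ is nondecreasing as relays decode user $i$'s HoL packet, since enlarging $\{a:R_{a,i}(n)=1\}$ can only shorten the expected decoding time — the precise sense in which relay assistance raises a user's priority.

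The main obstacle is the rigor of the transmitter-selection step in the presence of relay-state feedback: one must exhibit a monotone coupling establishing that the transmitter favored by the index for user $i$ cannot, through its effect on which relays overhear and decode the packet, degrade the system in later slots — i.e., that the orderings that transmitter strength induces on user decoding time and on relay-state evolution are mutually compatible. A secondary, more technical obstacle is verifying that adjoining the relay vectors to the state leaves intact precisely the structural conditions (the branching/exchangeability and indexability properties) on which the LPA optimality proof of \cite{HuaETAL:WireScheHybrARQ:Nov:05} rests, so that the theorem transfers rather than merely holding by analogy.
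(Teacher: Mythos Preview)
Your plan differs from the paper's in a structural way. The paper does not separate the transmitter decision from the user decision at all: it transforms the RLPA problem \emph{in one step} into a Klimov multiclass queueing instance (the ``RLPAK'' problem) whose classes are triples $(i,r_i,l)$, with $l$ indexing the best relay (by channel gain to user $i$) that has already decoded the HoL packet. The service dynamics of class $(i,r_i,l)$ are those induced by having transmitter $d_{i,l}$ (or the base station when $l=0$) send. A structural lemma in the spirit of \cite[Lemma~1]{HuaETAL:WireScheHybrARQ:Nov:05} then shows $T_{i,r_i',m}^{(A_k)}\le T_{i,r_i,l}^{(A_k)}$ for $r_i'>r_i$ and $m\ge l$, so the Klimov indices $c_{i,r_i}/T_{i,r_i,l}^{(\Omega)}$ are automatically monotone in both the retransmission count and the relay index, and Klimov's theorem delivers the priority rule for user \emph{and} transmitter simultaneously.

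What this buys relative to your two-stage route is precisely the elimination of your acknowledged main obstacle. Your sample-path coupling for the transmitter step is delicate because after one deviating transmission the relay vectors $\mathbf{S}_a(\cdot)$ diverge and ``agrees everywhere except that single transmission'' is no longer well defined on a common state trajectory; you would need a monotone coupling on the whole relay-state process, not just on user $i$'s decoding time. The paper never needs such a coupling: by folding the relay status into the Klimov class label, the transmitter comparison becomes a comparison of indices across classes $(i,r_i,l)$ with varying $l$, which the structural lemma handles directly. Note also that your second stage does not actually reduce to the LPA problem---as you yourself observe, the class must still encode ``the set of relays holding it''---so you are already doing the paper's expanded Klimov construction there, which makes the separate coupling step redundant rather than merely difficult.
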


\begin{proof}
The proof is in Appendix \ref{rlpa-proof}.
\end{proof}

We now provide an intuitive justification of Theorem \ref{rlpa}.  In \cite{HuaETAL:WireScheHybrARQ:Nov:05}, each user $i$ is assigned a fixed priority index $c_{i,r_i^{HoL}}/T_{i,r_i^{HoL}}$, where $c_{i,r_i^{HoL}}$ is the holding-cost rate for the HoL packet of user $i$ that has undergone $r_i^{HoL}$ transmission attempts and $0\leq r_i^{HoL}\leq r_i^{max}$.  Also, $T_{i,r_i^{HoL}}$ is the expected service time for the HoL packet of user $i$, where
\begin{equation}
T_{i,r_i^{HoL}} = 1 + \sum_{b=r_i^{HoL}}^{r_i^{max}-1}\prod_{l=r_i^{HoL}}^{b}g_i(l) \label{exp-serv-time}
\end{equation}
and $g_i(l)$ is the probability of a decoding failure by user $i$ given that its HoL packet has been transmitted $l$ times.  The optimal policy from \cite{HuaETAL:WireScheHybrARQ:Nov:05} is to schedule the user with the highest priority index.  To simplify the following discussion, we have not considered the impact of relaying in \eqref{exp-serv-time}.

A lower value of $T_{i,r_i^{HoL}}$ implies that user $i$ achieves a higher priority index.  Now consider the two-user system in \cite[Figure 4]{HuaETAL:WireScheHybrARQ:Nov:05}, which we have re-plotted as Fig. \ref{hua05-fig4}.  Here, a new arrival for the first user has priority over a retransmission for the second user.  No relays are present, which is equivalent to the RLPA problem if $\textbf{S}_a(n) = (0,0)\quad\forall a\in\{1,2,\ldots,M\}$.  Assume that only relay $a$ has decoded the HoL packet for user 2, and relay $a$ can decrease $T_{2,r_2^{HoL}}$ since $|h_{a,2}|^2 > |h_{t,2}|^2$.  This increases user 2's priority, and if $|h_{a,2}|^2$ is above a threshold value, user 2 can obtain a higher priority index than user 1, and so a retransmission for user 2 has priority over a new arrival for user 1.  In particular, $\pi_R(\textbf{S}(n),\textbf{S}_1(n),\textbf{S}_2(n),\ldots,\textbf{S}_M(n)) = (2,a)$ as opposed to $\pi(\textbf{S}(n)) = 1$ for the LPA problem.

Thus, the introduction of relays for the RLPA problem results in a modification to the optimal policy in \cite{HuaETAL:WireScheHybrARQ:Nov:05}.  Users are still sorted according to their priority indices and the highest priority user with a nonempty queue is scheduled.  In this case, though, each relay $a$ can inform the base station of its ability to improve the priority indices of some subset of the users by reporting $\textbf{S}_a(n)$ to the base.  The base station can calculate an improved priority index for each user $i$ such that $R_{a,i}(n) = 1$.  Then, all priority indices including any revised indices are sorted, and the highest priority user with a nonempty queue is scheduled along with the transmitter that yields that highest priority index.

\section{Relay-Assisted Draining Convex Problem}
Now we consider the relay-assisted draining convex (RDC) problem, which is a variant of the DC problem in \cite{HuaETAL:WireScheHybrARQ:Nov:05}.
The DC problem is a draining problem where no new packets arrive at the base station, and the base station wants to empty all of the user queues.  The base station computes a cost function $U_i$ for user $i$, where $U_i$ is an arbitrary increasing function of the queue length $x_i(n)$ and is independent of the number of transmission attempts of the HoL packet of user $i$.  Thus,
\begin{eqnarray}
U_i(x_i(n),r_i^{HoL}(n)) & = & U_i(x_i(n)). \nonumber
\end{eqnarray}

The base station initially has a set of packets $(x_1(1),x_2(1),\ldots,x_N(1))$.  The RDC problem entails determining the scheduling policy $\pi_R\in\Pi_R$ that minimizes the total expected draining cost
\begin{eqnarray}
J_{RDC} & = & \mathbb{E}_{\pi_R}\Bigg[\sum_{n=1}^{\infty}\sum_{i=1}^{N}U_i(x_i(n))\Bigg]. \nonumber
\end{eqnarray}

As in Section \ref{ralpa-main}, the optimal policy for the RDC problem is based on the fixed priority-index policy that is optimal for the DC problem \cite{HuaETAL:WireScheHybrARQ:Nov:05}.  For the RDC problem, knowledge of $\{\textbf{S}_1(n),\textbf{S}_2(n),\ldots,\textbf{S}_M(n)\}$ at the base station is useful in deciding which users can be served more quickly than others.  Note that as in the RLPA problem, cost increases with the incurred delay.

\newtheorem{rdc}[rlpa]{Theorem}
\begin{rdc}\label{rdc}
The optimal scheduling policy for the RDC problem is a priority-index rule, where the HoL packet with the highest priority index over all nonempty base station queues is selected.  The transmitter that yields the highest priority index transmits the selected HoL packet.
\end{rdc}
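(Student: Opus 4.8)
\noindent\emph{Proof proposal.}
The plan is to mirror the dynamic-programming treatment of the DC problem in \cite{HuaETAL:WireScheHybrARQ:Nov:05} on the enlarged state $(\textbf{S}(n),\textbf{S}_1(n),\ldots,\textbf{S}_M(n))$, and to split the action $(i,a)$ into a user-selection part and a transmitter-selection part that can be analyzed in sequence. First I would cast the RDC problem as a finite-horizon stochastic shortest-path problem: since each $U_i$ is increasing and every queue must be emptied, the draining time is almost surely finite, so it suffices to establish optimality of the priority-index rule for each $k$-slot truncation and then let $k\to\infty$, using that the truncated draining cost is monotone in $k$ so the limit inherits optimality. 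Let $V_k(\textbf{S},\textbf{S}_1,\ldots,\textbf{S}_M)$ be the optimal cost-to-go, satisfying the usual Bellman recursion over $(i,a)$ with $a$ restricted to $BS$ or to a relay $a$ with $R_{a,i}=1$.

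The second step decouples the transmitter decision. I would show by a sample-path coupling argument that, conditioned on serving user $i$ in the current slot, it is optimal to use the transmitter that minimizes the (relay-aware) expected service time of user $i$'s HoL packet — equivalently, the transmitter that yields the highest priority index for user $i$. The enabling facts are that relays acquire packets only from base-station transmissions, so a relay transmission leaves every $\textbf{S}_a$ unchanged for $a$ not scheduled, and that each $U_j$ is nondecreasing; hence a stochastically earlier departure of user $i$'s HoL packet pathwise dominates, slot by slot, the cost trajectory produced by any transmitter with a larger expected service time, while affecting no other queue. This reduces the RDC problem to a pure user-selection problem in which user $i$'s effective HoL service law is the relay-enhanced one, exactly the quantity that (generalizing \eqref{exp-serv-time}) defines its revised priority index.

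With the transmitter pinned down, the final step is an interchange argument on the user-selection decision, following the DC analysis in \cite{HuaETAL:WireScheHybrARQ:Nov:05}. Suppose some optimal policy, at a reachable state, serves a user $j$ whose relay-enhanced priority index is strictly below that of another user $i$ with $x_i>0$; couple the HARQ decoding outcomes across the two policies and swap the service devoted to $i$ and to $j$ over the matching stretches of slots. Using monotonicity and convexity of $U_i$ and $U_j$ exactly as in the DC proof, the swap does not increase $J_{RDC}$, forcing the priority ordering to be respected; ties and empty queues are resolved as in \cite{HuaETAL:WireScheHybrARQ:Nov:05}, and passing $k\to\infty$ completes the argument, as in the proof of Theorem \ref{rlpa} in Appendix \ref{rlpa-proof}.

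The step I expect to be the main obstacle is the interchange in the presence of relay state: unlike the no-relay DC problem, reordering which user the base station serves changes which packets the base station transmits, and therefore changes the future relay vectors $\textbf{S}_a(\cdot)$, so one must verify that these induced changes do not produce a cost increase that defeats the swap. I would handle this by coupling so that each relay's decoding events are indexed by the identity of the transmitted packet rather than by the slot number, and by arguing that the only channel through which relay state influences cost is the service time of the user currently being served — a quantity already absorbed into the revised priority indices by the second step — so that, after the swap, the relay vectors differ only in ways that are immaterial to the accumulated cost.
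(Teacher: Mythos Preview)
Your proposal takes a genuinely different route from the paper. The paper does not use dynamic programming or an interchange argument; instead it transforms the RDC problem into an instance of Klimov's multiclass queueing problem \cite{Kli:TimeSharServSyst:74} by enlarging the queue set so that each queue is labeled $(i,r_i,x_i,l)$, where $l$ records the best relay (in the ordering $|h_{d_{i,1},i}|^2<\cdots<|h_{d_{i,M},i}|^2$) that has already decoded the HoL packet. The relay state is thus absorbed into the queue identity, Klimov's theorem delivers a priority-index rule for the transformed problem, and the analogue of Lemma~\ref{rlpak-lemma} (monotonicity of the index in both $r_i$ and $l$) simultaneously pins down the transmitter choice. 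This sidesteps both of your hard steps at once.

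Your route, as written, has two genuine gaps. First, your Step~2 asserts that ``relays acquire packets only from base-station transmissions, so a relay transmission leaves every $\textbf{S}_a$ unchanged for $a$ not scheduled.'' That is false in the paper's model: whenever a packet is transmitted, by the base station \emph{or} by a relay, every relay that has not yet decoded it attempts to decode (see the transitions $p_{(i,r_i,l),(i,r_i+1,n)}$ with $n>l$ in Appendix~\ref{rlpa-proof}). Hence the transmitter choice \emph{does} affect the evolution of the other $\textbf{S}_a$, and the sample-path domination you sketch for the transmitter decision does not go through. Second, and for the same reason, the interchange in Step~3 is not a clean reduction to the no-relay DC interchange: swapping which user is served in a slot also swaps which packet the relays are listening to, so after the swap the relay vectors for \emph{both} users have changed, altering their subsequent service-time laws. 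Your proposed fix---coupling relay decodings by packet identity and declaring the residual differences ``immaterial''---does not close this, because the differences land exactly in the relay-enhanced service laws that define the very priority indices you are comparing. The paper's Klimov transformation avoids the problem entirely because no interchange is ever performed: the relay state is part of the queue label, and index optimality comes directly from \cite{Kli:TimeSharServSyst:74}.
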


\begin{proof}
The proof is similar to that in Appendix \ref{rlpa-proof}, so we provide a brief sketch of it as follows.  As in Appendix \ref{rlpa-proof}, we transform the RDC problem into an instance of Klimov's multiclass queueing problem \cite{Kli:TimeSharServSyst:74}.  This implies that the transformed problem, which we refer to as the RDCK problem, has an optimal priority index policy.  Finally, it can be shown that this policy is also optimal for the RDC problem.

Based on Appendix \ref{rlpa-proof}, in the RDCK problem each user $i$ has $K_i = (M+1)x_i(1)(r_i^{max}+1)$ queues, and each queue is labeled as $(i,r_i,x_i,l)$.  A packet in $(i,r_i,x_i,l)$ has been transmitted $r_i$ times, has been decoded by relay $d_{i,l}$ and has not been decoded by relay $d_{i,m}$ for $l < m\leq M$.

Now, the objective is to find $\pi_R\in\Pi_R$ that minimizes
\begin{eqnarray}
J_{RDCK} & = & \mathbb{E}_{\pi_R}\Bigg[\sum_{n=1}^{\infty}\sum_{(i,r_i,x_i,l)\in\Omega}\textbf{1}_{i,r_i,x_i,l}(n)U_i(x_i)\Bigg] \nonumber
\end{eqnarray}
where
\begin{equation}
\textbf{1}_{i,r_i,x_i,l}(n) = \left\{ \begin{array}{ll}
1 & (i,r_i,x_i,l) \textnormal{ is nonempty in slot n} \\
0 & \textnormal{otherwise}. \nonumber
\end{array} \right.
\end{equation}

It follows from \cite[Theorem 2]{HuaETAL:WireScheHybrARQ:Nov:05} and \cite[Lemma 2]{HuaETAL:WireScheHybrARQ:Nov:05} that the optimal policy for the RDCK problem assigns queue $(i,r_i^{'},x_i,m)$ higher priority than queue $(i,r_i,x_i,l)$ for all $i$, $x_i$, $r_i^{'} > r_i$ and $m\geq l$.

Since the RDCK problem is a special case of Klimov's problem, the optimal policy for the RDCK problem is a priority-index rule.  We then transform the RDCK problem back to the RDC problem to conclude that the optimal policy for the RDC problem is also a priority-index rule.  The HoL packet with the highest priority index along with the transmitter that yields that index are selected over all nonempty queues at the base station.  Note that unlike the RLPA problem, the priority indices for the RDC problem do not admit closed-form expressions.
\end{proof}

The intuitive justification of Theorem \ref{rdc} is similar to that of Theorem \ref{rlpa}.  It should be noted that for the RLPA and RDC problems, a relay $a$ with $R_{a,i}(n) = 1$ for some user $i$ only increases the priority index of user $i$ if $|h_{t,i}|^2 < |h_{a,i}|^2$.

\section{Simulation Results}
Now we evaluate the performance of relaying in the RLPA problem.  Fig. \ref{relay-location} displays the impact of employing $M = 1$ relay in a system with $N = 2$ users and arrival rates $\lambda_1 = \lambda_2 = 0.3$.  The maximum number of retransmissions is $r_1^{max} = r_2^{max} = 2$, and the cost rates are $c_1 = [0.98~1~1.02]$ and $c_2 = [1.25~1.5~1.75]$.  We model the effects of limited transmit-side channel knowledge by assuming that each channel $h_{t,i}$ and $h_{1,i}$ undergoes Rayleigh fading and varies independently between time slots.  The base station only knows $\mathbb{E}(|h_{t,i}|^2)$ and $\mathbb{E}(|h_{1,i}|^2)$, while the relay only knows $\mathbb{E}(|h_{1,i}|^2)$.  Assuming the presence of only channel distribution information at each transmitter implies that the base station only knows its average channel parameters to users 1 and 2 as $\bar{\eta}_1 = \bar{\eta}_2 = 0.9$, and the relay only knows its average channel parameters to users 1 and 2 as $\bar{\eta}_{1,1}$ and $\bar{\eta}_{1,2}$, respectively.  The base station computes the user probability of decoding failure, assuming that no relays assist it, as
\begin{equation}\label{dec-prob}
g_i(r_i) = \left\{ \begin{array}{ll}
\bar{\eta}_i\cdot 0.9^{r_i} & 0\leq r_i < r_i^{max} \\
0 & r_i = r_i^{max}.
\end{array} \right.
\end{equation}
To characterize the performance impact of relaying, we vary $\bar{\eta}_{1,2}$ and fix $\bar{\eta}_{1,1} = 0.9$, as this limits the ability of the relay to assist user 1 and allows us to focus on how the relay assists user 2.

We run our simulation over $B$ time slots, where at most one packet can be decoded in each time slot.  If $D$ packets are decoded during this simulation run, we define the throughput as $D/B$.  We see that the long-term cost decreases and the throughput increases as the average channel gains from the relay to the base station and user 2 increase.  In particular, the cost decreases by $17.3\%$ and the throughput increases by $43.2\%$ when $\bar{\eta}_{1,2}$ increases from 0.1 to 0.9.  This demonstrates the performance gains via intelligent relay deployment.  Also, we see that the effects of limited channel knowledge at the base station and the relay are asymptotically negligible.



Fig. \ref{user2eta} shows how the optimal policy behaves as a function of the average base station channel gains to the users.  We adopt the same parameters as in Fig. \ref{relay-location}, except for $\bar{\eta}_{1,1} = \bar{\eta}_{1,2} = 0.15$, and we vary $\bar{\eta}_2$.

We see that the throughput of the optimal policy deteriorates as the average base station channel gain to user 2 decreases.  We also consider a case where no relay is present, and it can be seen that the throughput of the optimal policy decreases at an even faster rate than the case where $M = 1$ relay assists the base station.  This example further highlights the inherent challenges in a cellular network of servicing cell-edge users.

\section{Conclusion}
We have considered the problem of user scheduling in a downlink wireless system with HARQ retransmissions.  By allowing fixed relays to assist the base station or access point in servicing a scheduled user, a cost function of the user queue lengths at the base station and the number of retransmissions of the HoL packet for each user can be minimized.  We have proved that the optimal scheduler for the relay-assisted extensions of two problems in \cite{HuaETAL:WireScheHybrARQ:Nov:05} is a priority-index rule.

The main contribution of this work opens up several avenues for further investigation of the relay-assisted scheduling problem.  In particular, it is clear that relay-assisted scheduling is actually a relay selection problem, and extensive prior work on relay selection has conclusively shown its difficult cross-layer nature.  Thus, a more comprehensive approach to this problem would consider additional factors such as the specific type of HARQ being employed at the relays and the users along with more general packet arrival processes at the base station.  For example, if one user is downloading multimedia content while another is sending text messages, this could be used to design an appropriate cost function for each user.  Also, the performance impact of real-world issues such as timing mismatch between the base station and any of the relays in its network should be evaluated.  For example, if a delay of one time slot occurs before a selected relay receives its selection notification from the base station, then it will transmit and cause a packet collision with another selected relay or the base station, which could degrade the achieved throughput.

\appendix

\section{Proof of Theorem \ref{rlpa}}\label{rlpa-proof}
As in \cite{HuaETAL:WireScheHybrARQ:Nov:05}, we transform the RLPA problem into an instance of the multiclass queueing problem of Klimov \cite{Kli:TimeSharServSyst:74}.  Thus, the transformed problem, which we refer to as the RLPAK problem, has an optimal priority index policy.  We then show that this policy is also optimal for the RLPA problem.

Note that for each user $i$, the $M$ relays are sorted as $\{d_{i,1},d_{i,2},\ldots,d_{i,M}\}$, where $|h_{d_{i,1},i}|^2 < |h_{d_{i,2},i}|^2 < \cdots < |h_{d_{i,M},i}|^2$.  Then, each user $i$ has $(M+1)(r_i^{max}+1)$ queues, and each queue is labeled as $(i,r_i,l)$.  A packet in $(i,r_i,l)$ has been transmitted $r_i$ times, has been decoded by relay $d_{i,l}$, and has not been decoded by relay $d_{i,m}$ for $l < m\leq M$.  There are a total of $K = \sum_{i=1}^{N}(M+1)(r_i^{max}+1)$ queues.  If $\lambda = \sum_{i=1}^{N}\lambda_i$, each arriving packet is assigned to $(i,0,0)$ with probability $p_{i,0} = \lambda_i/\lambda$, and $(i,r_i,l)$ has a deterministic service time of $b_{i,r_i,l} = 1$ time slot.  The cost of storing a packet in queue $(i,r_i,l)$ is $c_{i,r_i,l}$ and the number of packets in queue $(i,r_i,l)$ at the beginning of the $n$th time slot is $x_{i,r_i,l}(n)$.  Fig. \ref{rlpak-figure} shows an example of the RLPAK problem for user 1 where $r_1^{max} = 2$.

The queue transition probabilities in the RLPAK problem are determined as follows.  Let $g_{i,l,k}(r_i)$ denote the probability that relay $d_{i,l}$ cannot decode the HoL packet of user $i$ after its transmission attempt $r_i$ by relay $d_{i,k}$.  Also, let $g_{i,l}(r_i,1)$ denote the probability that user $i$ cannot decode its HoL packet after its transmission attempt $r_i$ by relay $d_{i,l}$.  Then
\begin{equation}
\begin{array}{lll}
p_{(i,r_i,0),(i,r_i+1,0)} & = & g_i(r_i)g_{i,1,0}(r_i)g_{i,2,0}(r_i)\cdots g_{i,M,0}(r_i), \\
p_{(i,r_i,0),(i,r_i+1,l)} & = & g_i(r_i)(1-g_{i,l,0}(r_i))g_{i,l+1,0}(r_i)g_{i,l+2,0}(r_i)\cdots g_{i,M,0}(r_i) \\
p_{(i,r_i,l),(i,r_i+1,n)} & = & g_{i,l}(r_i,1)(1-g_{i,n,l}(r_i))g_{i,n+1,l}(r_i)g_{i,n+2,l}(r_i)\cdots g_{i,M,l}(r_i), n > l, \\
p_{(i,r_i,l),(i,r_i+1,l)} & = & g_{i,l}(r_i,1)g_{i,l+1,l}(r_i)g_{i,l+2,l}(r_i)\cdots g_{i,M,l}(r_i), \\
p_{(i,r_i,l),(i,r_i+1,n)} & = & 0, n < l \nonumber
\end{array}
\end{equation}
and so the packet departs the system from $(i,r_i,0)$, $(i,r_i,l)$ and $(i,r_i^{max},l)$ with probabilities $1-g_i(r_i)$, $1-g_{i,l}(r_i,1)$ and 1 respectively where $l\in\{0,1,\ldots,M\}$.

Thus, for any $A\subset\Omega = \{1,2,\ldots,K\}$ and any $(i,r_i,l)\in A$, the average total service time is
\begin{eqnarray}
T_{i,r_i,l}^{(A)} & = & 1 + \sum_{k,r_k,m}p_{(i,r_i,l),(k,r_k,m)}T_{k,r_k,m}^{(A)}. \nonumber
\end{eqnarray}

From the above discussion, it can be concluded that the RLPAK problem, which is a transformed version of the RLPA problem, is an instance of the multiclass queueing problem of \cite{Kli:TimeSharServSyst:74}.  This conclusion also relies on the simple queueing dynamics of the relay: 1) a packet only arrives at the relay if the base station has transmitted it, and 2) the relay automatically flushes a packet once it has been decoded by its intended user.  In addition, the base station automatically flushes a packet once it has been decoded by its intended user.  It should be noted that the state space of the RLPAK problem is an expanded version of that in the LPAK problem.

Now, the objective is to find $\pi_R\in\Pi_R$ that minimizes
\begin{eqnarray}
J_{RLPAK} & = & \lim_{\tau\rightarrow\infty}\frac{1}{\tau}\mathbb{E}_{\pi_R}\Bigg[\sum_{n=1}^{\tau}\sum_{(i,r_i,l)\in\Omega}c_{i,r_i,l}x_{i,r_i,l}(n)\Bigg]. \nonumber
\end{eqnarray}
To this end, we state the following result.

\newtheorem{rlpak-lemma}{Lemma}
\begin{rlpak-lemma}\label{rlpak-lemma}
Let $A_k$, $k=1,2,\ldots,K$ be the sets of queues generated by the Klimov algorithm in \cite[Section 3]{HuaETAL:WireScheHybrARQ:Nov:05} for the LPAK problem.  For each $k = 1,2,\ldots,K$ and for all $(i,r_i,l)\in A_k$: \\
1) $(i,r_i^{'},m)\in A_k$ \textnormal{for all} $r_i^{'} > r_i$ \textnormal{and for all} $m\geq l$. \\
2) $T_{i,r_i,0}^{(A_k)} = 1+\sum_{q=r_i}^{r_i^{max}-1}\prod_{s=r_i}^{q}g_i(s) = T_{i,r_i,0}^{(\Omega)}$. \\
3) $T_{i,r_i,m}^{(A_k)} = 1+\sum_{q=r_i}^{r_i^{max}-1}\prod_{s=r_i}^{q}g_{i,m}(s,1) = T_{i,r_i,m}^{(\Omega)}, m > 0$. \\
4) $T_{i,r_i^{'},m}^{(A_k)} \leq T_{i,r_i,l}^{(A_k)}$ \textnormal{for all} $r_i^{'} > r_i$ \textnormal{and for all} $m\geq l$. \\
5) $\alpha_k = \textnormal{arg}\min_{(i,r_i,l)\in A_k}(c_{i,r_i,l}/T_{i,r_i,l}^{(\Omega)})$.
\end{rlpak-lemma}

\begin{proof}
This result follows in a straightforward manner from \cite[Lemma 1]{HuaETAL:WireScheHybrARQ:Nov:05}.
\end{proof}

By combining \cite[Theorem 1]{HuaETAL:WireScheHybrARQ:Nov:05} and Lemma \ref{rlpak-lemma}, it follows that the optimal scheduling policy for the RLPAK problem is a priority-index rule where the priorities $\alpha_1,\alpha_2,\ldots,\alpha_K$ satisfy
\[\frac{c_{\alpha_1}}{T_{\alpha_1}^{(\Omega)}}\geq\frac{c_{\alpha_2}}{T_{\alpha_2}^{(\Omega)}}\geq\ldots\geq\frac{c_{\alpha_K}}{T_{\alpha_K}^{(\Omega)}}.\]

Since the optimal scheduling policy for the RLPAK problem is a priority-index rule, we employ \cite[Corollary 1]{HuaETAL:WireScheHybrARQ:Nov:05} to conclude the the optimal scheduling policy for the RLPA problem is also a priority-index rule.  The HoL packet with the highest priority index of $c_{i,r_i^{HoL}}/T_{i,r_i^{HoL}}$ along with the transmitter that yields that index are selected over all nonempty queues at the base station.

\begin{figure}[tb]
\begin{center}
\includegraphics[width=3.0in]{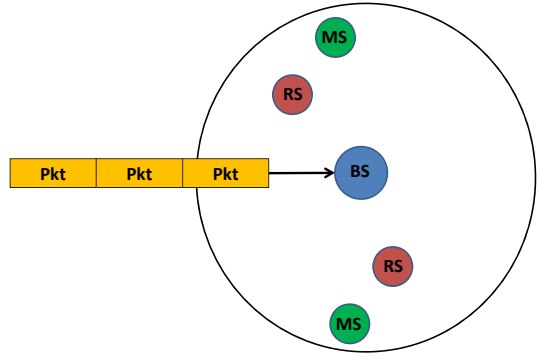}
\end{center}
\caption{Wireless network with relay-assisted scheduling.}
\label{sys-model}
\end{figure}

\begin{figure}[tb]
\begin{center}
\includegraphics[width=3.0in]{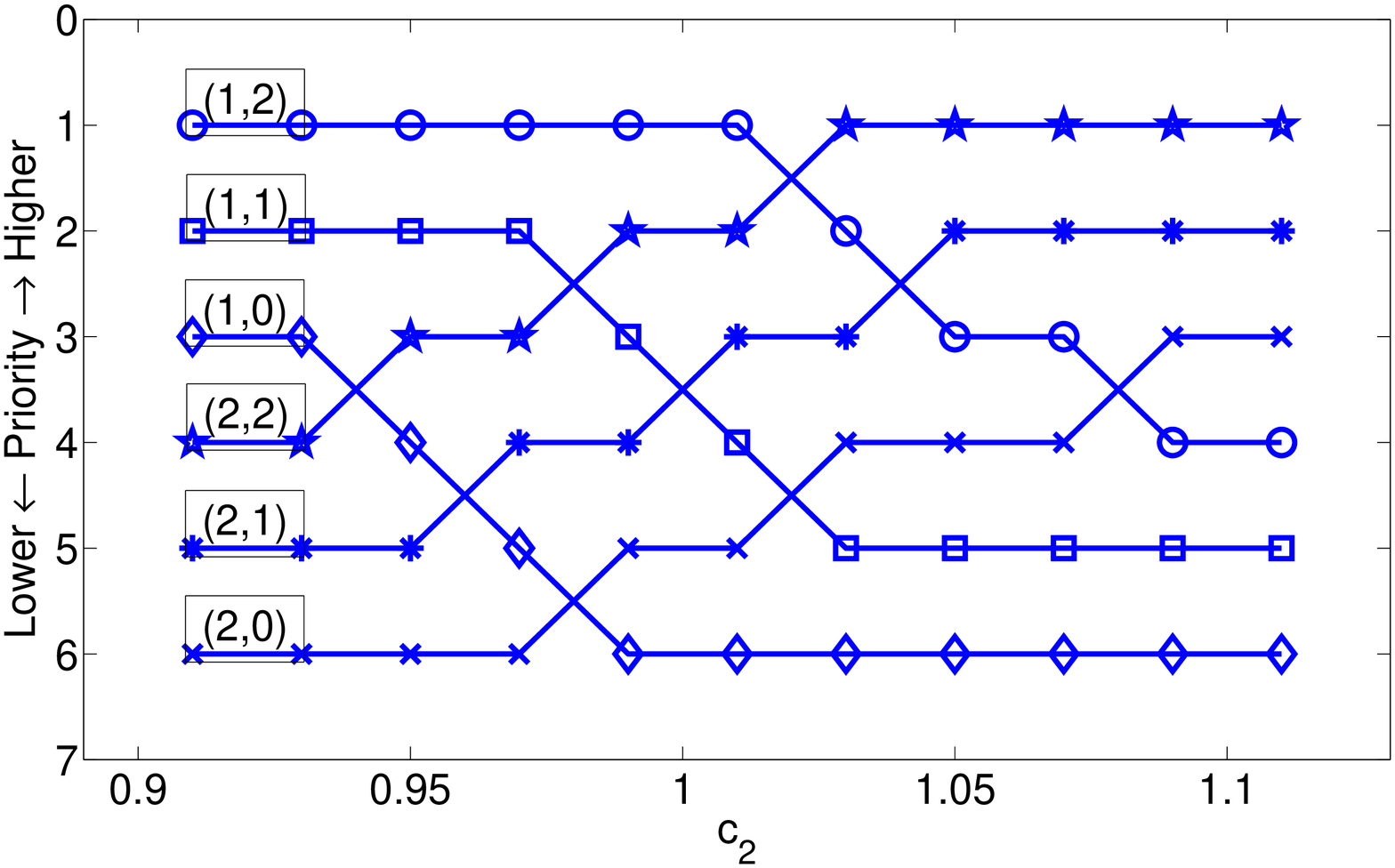}
\end{center}
\caption{Optimal priority orders versus holding-cost rate of user 2 in the LPAK problem.}
\label{hua05-fig4}
\end{figure}

\begin{figure}[tb]
\begin{center}
\includegraphics[width=3.0in]{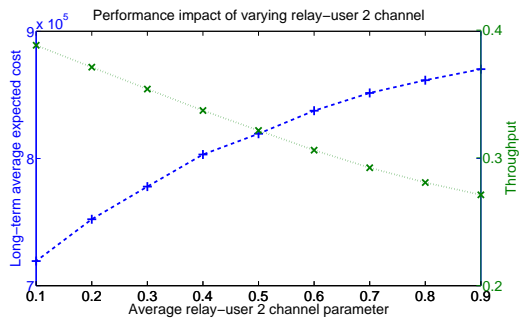}
\end{center}
\caption{Long-term average expected cost and throughput for RLPA problem as function of average channel from relay to user 2.}
\label{relay-location}
\end{figure}


\begin{figure}[tb]
\begin{center}
\includegraphics[width=3.0in]{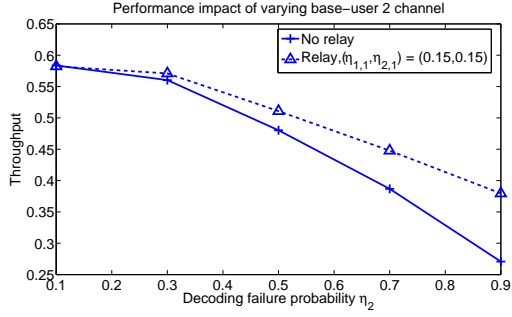}
\end{center}
\caption{Throughput for RLPA problem as function of average channel from base station to user 2.}
\label{user2eta}
\end{figure}



\begin{figure}[tb]
\begin{center}
\includegraphics[width=3.0in]{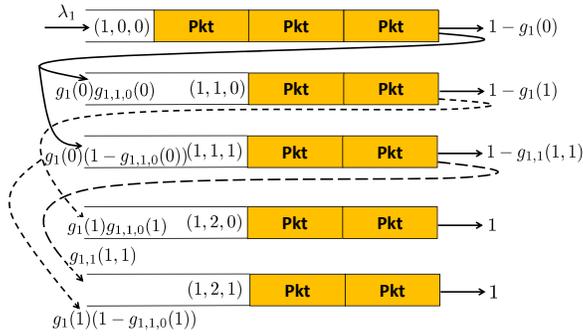}
\end{center}
\caption{System model for RLPAK problem with queues for user 1.}
\label{rlpak-figure}
\end{figure}

\end{document}